\newcommand{\ka}{\ensuremath{(a,b,\varepsilon)}--key-agreement\ }
\title{Breaking One-Round Key-Agreement Protocols in the Random Oracle Model}
\author{Miroslava Sot\'{a}kov\'{a}}
\institute{Department of Computer Science\\
\email{mirka@cs.au.dk}}
\begin{document}
\maketitle

\begin{abstract}
In this paper we study one-round key-agreement protocols analogous to Merkle's puzzles in the random oracle model. The players Alice and Bob are allowed to query a random permutation oracle $n$ times and upon their queries and communication, they both output the same key with high probability. We prove that Eve can always break such a protocol by querying the oracle $O(n^2)$ times. The long-time unproven optimality of the quadratic bound in the fully general, multi-round scenario has been shown recently by Barak and Mahmoody-Ghidary. The results in this paper have been found independently of their work. 
\end{abstract}

\section{Introduction}

In this paper, we find a tight upper bound on the number of queries needed to break a key-agreement protocol in the random oracle model. Key-agreement protocols called Merkle's puzzles were constructed by Ralph Merkle in 1974 but only published in 1978~\cite{Mer78}. They are one of the earliest examples of public-key cryptographic protocols. 

The key-agreement ala Merkle between Alice and Bob proceeds as follows: Alice constructs a large number of puzzles, each of them being possible to solve with Bob's computational resources. In other words, all of them are in the form of an encrypted message with an unknown key that is short enough to allow for a brute force attack. After receiving the message from Alice, Bob chooses one puzzle uniformly at random and solves it. The solution contains an identifier and a key. Bob encrypts the identifier with the key, and announces it back to Alice. The solution of the puzzle solved by Bob becomes Alice's and Bob's secret key. Since the puzzle's identifier is sent to Alice as a message encrypted with a key that is unknown to Eve, the eavesdropper's best strategy to attack such a protocol is to solve as many puzzles as possible. To achieve constant probability of success, Eve has to solve a constant fraction of them, which might require much more computational power than what is needed by the legitimate players.

In a similar way we construct a key-agreement protocol in the random oracle scenario, where the computational difficulty of a key-agreement protocol is expressed by the number of oracle queries that Alice and Bob make in order to agree on a secret key. Instead of creating many puzzles, Alice queries the oracle in many positions that are unknown to both Bob and Eve, and sends the images of the queried elements to Bob. Bob queries the oracle in sufficiently many positions to get a collision with Alice's set of queries with high probability. He recognizes the collision from Alice's message and reports it back to Alice by its identifier -- the oracle image. The pre-image becomes Alice's and Bob's secret key. In addition to only few queries, the communication gives Eve a little information about the key, since the oracle is random. With the same number of queries as Bob, she would find a collision with Alice's set of queries with high probability, but not necessarily the one found by Bob. Hence, finding the right element might require significantly more oracle queries than Alice and Bob needed to agree on the key.
 
Until recently, the best upper bound on Eve's number of queries needed to break such protocols have been shown by Impagliazzo and Rudich~\cite{IR91}. They prove that in any key-agreement protocol based on a random-permutation oracle, where Alice and Bob agree on the secret key in $n$ rounds in such a way that they query only one query per round (normal form of a protocol), Eve needs at most $O(n^3)$ oracle queries to output a secret key guess that matches with Bob's secret key with the same probability as Alice's key does. For a protocol in the general form, $O(n^6)$ are sufficient for an attack, which can be proven by showing that any protocol can be transformed into its normal form with at most quadratic blow-up in the number of oracle queries made by the players. In~\cite{IR91}, the question is studied in the larger context to show that possibility of secure key-agreement relative to some random permutation oracle implies ${\rm P}\neq {\rm NP}$. In other words,  proofs for showing that existence of one-way functions implies existence of secure key-agreement do not relativize.

The bound from~\cite{IR91} has been improved recently by Barak and Mahmoody-Ghidary~\cite{barak-2008} who show that in fact, $O(n^2)$ are sufficient for Eve's attack. 

In our paper we deal with one-round key-agreement protocols where Alice and Bob query the oracle $a$ and $b$ times, respectively. Such protocols form a subset of protocols whose normal form consists of $a+b$ rounds. We prove the tight -- $O((a+b)^2)$ upper bound on the number of queries Eve needs to break the protocol. 

Throughout the paper, we use the following notation: $\chi_A$ denotes the characteristic function of set $A$, $\mathbb{E}(X)$ denotes the mean value of random variable $X$, and $\mathbb{E}^c(X)$ denotes the 
mean value of $X$, conditioned on information $c$. 

\section{One-Round Key-Agreement Protocols}
In this section, we model one round key-agreement protocols
between Alice and Bob. We assume that Alice, Bob, and an eavesdropper Eve
have access to an oracle computing a random permutation $f$ on $\{1,\dots,n\}$. We define a one-round key-agreement protocol as follows:
\medskip

{\bf Protocol 1}

Given $n\in  \mathbb{N}$ and an oracle computing  a random permutation $f$ on $\{1,\dots,n\}$,
\begin{enumerate}
\item Alice queries the oracle $f$ in positions $\mathcal{A}_1\in \{1,\dots,n\}^{\leq a}$, computes a message $c_A$ and sends it to Bob.
\item Bob, given $c_A$, queries the oracle $f$ in positions $\mathcal{B}\in \{1,\dots,n\}^{\leq b}$, computes message $c_B$ and sends it to Alice. Bob generates the secret key $k_B\in\{0,1\}^{\ell}$, $k_B=g_B(\mathcal{B},f(\mathcal{B}),c=(c_A,c_B),R_B)$, where $R_B$ denotes his local randomness.
\item Alice, given $c$, queries the oracle in positions $\mathcal{A}_2\subseteq \{1,\dots,n\}$ such that for $\mathcal{A}:=\mathcal{A}_1\cup \mathcal{A}_2$, $|\mathcal{A}|\leq a$, and generates the secret key $k_A\in\{0,1\}^{\ell}$, $k_A=g_A(\mathcal{A},f(\mathcal{A}),c,R_A)$, where $R_A$ denotes her local randomness.
\end{enumerate}

We denote by \ka any one-round key-agreement protocol
defined as above and satisfying the following condition: $\Pr{[k_A\neq k_B]}\leq \varepsilon$ where $\varepsilon<1$ is a constant.
\medskip

Notice that $a$ and $b$ are functions of $n$, but for simplicity we refer to them by $a$ and $b$, instead of using $a(n)$ and $b(n)$, if the latter one is not explicitly needed.  
\medskip
Since key-agreement protocols take place between players Alice and Bob
sharing no initial secret, the key generation mechanism must involve
only common queries to the oracle $f$. We say that Eve breaks the protocol if she outputs a string that agrees with Bob's key with the same probability as Alice does.

\begin{lemma}
\label{intersection}
In order to break an \ka protocol it is sufficient for Eve to query all intersection queries of Alice and Bob used for the generation of Alice's secret key. 
\end{lemma}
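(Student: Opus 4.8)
The plan is to show that knowledge of the intersection queries lets Eve simulate Alice's key-generation distribution faithfully.

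First I would set up the right conditioning. After observing the transcript $c=(c_A,c_B)$, Eve knows $c$ but not the oracle answers on $\mathcal{A}$ and $\mathcal{B}$. The key object is the posterior distribution over the oracle $f$ and the players' local randomness $(R_A,R_B)$ given $c$. Conditioned on $c$, Bob's key is $k_B=g_B(\mathcal{B},f(\mathcal{B}),c,R_B)$, a deterministic function of quantities that — apart from $c$ — depend only on $f$ restricted to $\mathcal{B}$ and on $R_B$. Similarly $k_A=g_A(\mathcal{A},f(\mathcal{A}),c,R_A)$. Since the protocol is secure, $\Pr[k_A=k_B]\ge 1-\varepsilon$, so any string that matches Alice's key with probability $p$ matches Bob's with probability at least $p-\varepsilon$; thus it suffices for Eve to reconstruct a sample from the conditional distribution of $k_A$ given $c$.

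The heart of the argument is a conditional-independence claim: given the transcript $c$ \emph{and} given the oracle's values on the intersection set $I:=\mathcal{A}\cap\mathcal{B}$ (equivalently, on $\mathcal{A}$, since the extra values in $\mathcal{A}\setminus\mathcal{B}$ are irrelevant to $k_B$ and can be resampled), the pair $(\mathcal{A},f(\mathcal{A}),R_A)$ that Alice uses is distributed exactly as it would be from Eve's point of view once Eve has made those same intersection queries. Concretely, I would argue that $\mathcal{A}$ and $R_A$ are determined by $c_A$ together with $c$ and fresh randomness (Alice's behavior given the transcript), and that the remaining oracle values $f(\mathcal{A}_2\setminus I)$ that Alice sees but Eve does not are, conditioned on $c$ and on the values on $I$, distributed just like a fresh random permutation extension — because those positions were never ``touched'' in a way that influenced the transcript. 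Hence Eve, having queried $I$, can internally run Alice's algorithm: pick $R_A$ and the residual oracle values from their correct conditional distributions and output $g_A(\mathcal{A},f(\mathcal{A}),c,R_A)$. The resulting guess has exactly the same distribution as $k_A$, so it agrees with $k_B$ with the same probability Alice's key does, which is the definition of breaking the protocol.

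I expect the main obstacle to be making the conditional-independence step fully rigorous: one must check that conditioning on the transcript $c$ does not secretly correlate the oracle values on $\mathcal{A}\setminus I$ with anything Bob uses, and that Alice's adaptive second batch of queries $\mathcal{A}_2$ does not create a dependency that Eve cannot mirror. The clean way to handle this is to fix all of Alice's and Bob's randomness and the oracle, expose only what each party and the transcript reveal, and observe that the ``view'' of Eve-plus-intersection-queries is a sufficient statistic for $k_A$; I would phrase it as an equality of conditional distributions $k_A \mid (c, f|_I) \;\overset{d}{=}\; k_A \mid (\text{Eve's view after querying } I)$ and verify it by noting that all the generating ingredients of $k_A$ are, conditionally, functions of exactly this data plus independent fresh randomness.
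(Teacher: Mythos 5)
Your argument is essentially the paper's: Eve, knowing the transcript and the oracle values on $\mathcal{A}_1\cap\mathcal{B}$, simulates a consistent Alice (the paper constructs a permutation $f'$ and a query set $\mathcal{A}'_1$ that Bob cannot distinguish from the real ones; you sample Alice's view from its posterior given that data) and outputs the key this simulated Alice would produce, concluding it matches $k_B$ with the same probability $k_A$ does. The conditional-independence step you flag as the main obstacle is left at the same informal level in the paper's own proof, so your proposal is correct to the same standard and follows the same route.
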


\begin{proof}
Eve querying all elements in $\mathcal{A}_1\cap \mathcal{B}$ can construct a permutation $f'$ matching with $f$ on $\mathcal{E}$ (Eve's queries), and a set $\mathcal{A}'_1$ of queries to the oracle computing $f'$ such that $c_A=c_{A'_1}$ and $f'$ is consistent with $c_B$. Therefore, after querying $\mathcal{B}$, Bob has exactly the same view about $\mathcal{A}_1$ as he has about $\mathcal{A}'_1$. Eve constructs the set $\mathcal{A}'_2$ according to $\mathcal{A}'_1$ and $c$, and then ``queries" the $f'$-oracle on the positions in $\mathcal{A}'_2$. Finally, she generates her secret key $k_E=g_A(\mathcal{A}',f(\mathcal{A}'),c,R_{A'})$, where $\mathcal{A}':=\mathcal{A}'_1\cup \mathcal{A}'_2$. From Bob's point of view, both $k_E$ and $k_A$ are generated from the same set $\mathcal{K}\subseteq \mathcal{A}\cap \mathcal{B}$, i.e. $\Pr[k_B=k_E]=\Pr[k_B=k_A]$.
\qed

\end{proof}

\section{Proof of the Quadratic Upper Bound}

We will consider the following attack of an \ka protocol:
\begin{enumerate}
\item Eve repeats Bob's querying strategy $\gamma a$ times for some constant $\gamma$ (i.e. makes $\gamma ab$ oracle queries) in order to query all queries in $\mathcal{A}_1\cap \mathcal{B}$ with constant probability
\item Eve extracts the position of the $\mathcal{A}_2$-queries from $c_B$ and queries the oracle on these positions ($a$ oracle queries)
\end{enumerate}

Next we prove that with the proposed strategy Eve breaks the protocol with constant probability. 

\begin{lemma}
\label{A1}
By repeating Bob's strategy independently $5a$ times, Eve finds all elements in $\mathcal{A}_1\cap \mathcal{B}$ with constant probability.
\end{lemma}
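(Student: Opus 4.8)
The plan is to use the fact that, once the oracle $f$ and Alice's message $c_A$ have been fixed, the set $\mathcal{A}_1$ is determined whereas Bob's query set becomes a random object depending only on his private coins, and Eve's simulated runs produce independent copies of exactly that object. First I would condition on a fixed permutation $f$ and a fixed value of Alice's randomness $R_A$; this pins down both $\mathcal{A}_1$ and $c_A$. Under this conditioning, Bob's true query set $\mathcal{B}$ and the $5a$ sets $\mathcal{B}_1,\dots,\mathcal{B}_{5a}$ that Eve obtains by rerunning Bob's algorithm on input $c_A$ against the real oracle $f$ with fresh, mutually independent randomness are i.i.d. It is crucial here that Eve feeds these simulations the genuine oracle answers (Bob's queries are adaptive, so only then does $\mathcal{B}_i$ have the same law as $\mathcal{B}$), and that the conditioning includes $c_A$ and not merely $f$ (so that $\mathcal{A}_1$ is fixed and independent of the $\mathcal{B}_i$'s).

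Given this setup, for each fixed $\alpha\in\mathcal{A}_1$ write $q_\alpha=\Pr[\alpha\in\mathcal{B}]$, with the probability over Bob's coins and $f,c_A$ fixed. Since Eve's query set $\mathcal{E}$ contains $\bigcup_{i=1}^{5a}\mathcal{B}_i$, the only way $\alpha$ is missed is if $\alpha\in\mathcal{B}$ but $\alpha\notin\mathcal{B}_i$ for every $i$, an event whose probability, by the i.i.d.\ structure, is $q_\alpha(1-q_\alpha)^{5a}\le q_\alpha e^{-5a q_\alpha}\le \frac{1}{5ae}$. A union bound over the at most $a$ elements of $\mathcal{A}_1$ then gives $\Pr[\mathcal{A}_1\cap\mathcal{B}\not\subseteq\mathcal{E}\mid f,R_A]\le \frac{a}{5ae}=\frac{1}{5e}<\frac15$. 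As this estimate does not depend on the choice of $f$ or $R_A$, it holds unconditionally, so Eve recovers all of $\mathcal{A}_1\cap\mathcal{B}$ with probability at least $1-\frac{1}{5e}$, a constant independent of $n$, $a$, and $b$.

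The calculus estimate $q(1-q)^{5a}\le q\,e^{-5aq}\le \frac{1}{5ae}$ and the union bound are routine; the step I expect to need the most care is the claim that $\mathcal{B},\mathcal{B}_1,\dots,\mathcal{B}_{5a}$ are i.i.d.\ given $(f,c_A)$. Making this precise requires observing that, for fixed $f$ and $c_A$, a run of Bob's algorithm is a deterministic function of its private coins, so its output law is well defined and is reproduced exactly by each simulation; that the coins of the $5a$ simulations are independent of one another and of Bob's coins; and that Eve's interaction with the consistent oracle $f$ during one simulation does not bias any later one. Pinning down this conditional-independence structure is what licenses the miss-probability computation; everything else is bookkeeping.
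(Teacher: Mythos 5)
Your proof is correct, and it takes a genuinely different --- and substantially simpler --- route than the paper. The conditioning on $(f,R_A)$ is exactly what is needed to make $\mathcal{B},\mathcal{B}_1,\dots,\mathcal{B}_{5a}$ conditionally i.i.d.\ (conditioning on $c_A$ alone would not suffice, since the shared random oracle would correlate Bob's adaptive queries with Eve's simulations), and the pointwise bound $q_\alpha(1-q_\alpha)^{5a}\le q_\alpha e^{-5aq_\alpha}\le \frac{1}{5ae}$ followed by a union bound over the at most $a$ elements of the now-fixed set $\mathcal{A}_1$ is airtight. The paper instead conditions only on $c_A$, sorts the elements of $\{1,\dots,n\}$ by $P_{\chi_{B}(x)|c_A}(1)$, peels off layers on which this probability exceeds $s_i/(2a)$ (where $s_i$ is the residual expected intersection size, which at least halves at each step), controls the size of each layer's contribution to $\mathcal{A}_1\cap\mathcal{B}$ via Markov's inequality, and sums the resulting failure probabilities over layers, arriving at success probability $1/8$. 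Your argument avoids the layering entirely by observing that the same quantity $q_\alpha$ governs both the chance that $\alpha$ lands in the intersection and the chance that Eve's repetitions catch it, so the product $q(1-q)^{5a}$ is uniformly small; this yields the stronger success probability $1-\frac{1}{5e}>0.9$ with the same $5ab$ queries and with no need for the paper's ``well-structured'' case analysis. The one step you flag as delicate --- the conditional i.i.d.\ structure --- is indeed the crux, and your justification (Bob's query set is a deterministic function of his coins once $f$ and $c_A$ are fixed, and the simulations use fresh independent coins against the same fixed $f$) is the right one.
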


\begin{proof} Let $A$ and $B$ denote the random variables associated with Alice querying the elements in $\mathcal{A}_1$ and Bob querying the elements in $\mathcal{B}$, respectively. Let $E$ denote the random variable associated with the set of Eve's queries $\mathcal{E}$. W.l.o.g., assume that for $x,y\in\{1,\dots,n\}, x\leq y$, $P_{\chi_{B}(x)|c_A}(1)\leq P_{\chi_{B}(y)|c_A}(1).$

Define $\mathcal{A}_1^0:=\mathcal{A}_1$, $\mathcal{B}^0:=\mathcal{B}$, $A^0=A$, $B^0:=B$, $s_0:=\mathbb{E}^{c_A}(|A_1\cap B|)$, and $n_0:=n$. In the $i$-th step, define $n_{i+1}$, $\mathcal{A}_1^{i+1}$, $\mathcal{B}^{i+1}$, $A^{i+1}$, $B^{i+1}$, $s_{i+1}$ in order to satisfy the following:
$$\forall x\in\{n_{i+1}+1,\dots,n_i\}:\ P_{\chi_{B}(x)|c_A}(1)\geq \frac{s_i}{2a},$$
$\mathcal{A}_1^{i+1}:=\mathcal{A}_1\setminus \{n_{i+1}+1,\dots,n\}$, $\mathcal{B}^{i+1}:=\mathcal{B}\setminus \{n_{i+1}+1,\dots,n\}$, let $A^{i+1}$, $B^{i+1}$ denote the corresponding random variables, and set $s_{i+1}:=\mathbb{E}^{c_A}(|A_1^{i+1}\cap B^{i+1}|).$ 

Furthermore, consider $u$ such that
$$\Pr[\mathcal{A}_1\cap \mathcal{B}\subseteq \{n_u+1,\dots,n\}|c_A]\geq \frac{1}{2}.$$
\medskip

First, we prove that
\begin{enumerate}
\item there exists $u\in \mathbb{N}$ with the desired property
\item $n_{i+1}<n_i$ for $i\in\{0,\dots,u-1\}$
\item $s_{i}-s_{i+1}\geq 1$ for $i\in\{0,\dots,u-1\}$ 
\item $s_u\geq 1$
\end{enumerate}
\medskip

We can write:
\begin{equation}
\nonumber
s_i=\mathbb{E}^{c_A}(|A_1^{i}\cap B^i|)=\sum_{\mathcal{A},|\mathcal{A}|\leq a}P_{A^i|c_A}(\mathcal{A})\sum_{\mathcal{B},|\mathcal{B}|\leq b}P_{\mathcal{A}\cap B^i|c_A}(|\mathcal{A}\cap \mathcal{B}|)
\end{equation}
hence, there exists at least one $\mathcal{A}\subseteq \{1,\dots,n\}^{\leq a}$ such that $\sum_{\mathcal{B},|\mathcal{B}|\leq b}P_{\mathcal{A}\cap B^i|c_A}(|\mathcal{A}\cap \mathcal{B}|\geq s_i$. 

Let us choose one such $\bar{\mathcal{A}}$. Then
$$s_i\leq\sum_{\mathcal{B}, |\mathcal{B}|\leq b}P_{\bar{\mathcal{A}}\cap B^i|c_A}(B)(|\bar{\mathcal{A}}\cap \mathcal{B}|
= \sum_{\mathcal{B}, |\mathcal{B}|\leq b}\sum_{x\in |\bar{\mathcal{A}}\cap \mathcal{B}|} P_{B^i|c_A}(\mathcal{B})
= \sum_{x\in \bar{\mathcal{A}}}\sum_{\mathcal{B}:\ x\in \bar{\mathcal{A}}\cap \mathcal{B}} P_{B^i|c_A}(\mathcal{B})
= \sum_{x\in\bar{\mathcal{A}}}P_{\chi_{B^i}(x)|c_A}(1).
$$

Since $|\bar{\mathcal{A}}|\leq a$, there is an $x\in \{1,\dots,n_i\}$ such that $P_{\chi_{B^i}(x)|c_A}(1)\geq \frac{s_i}{a}.$ If we remove $x\in\{1,\dots,n_i\}$ such that $P_{\chi_{B^i}(x)|c_A}(1)\geq \frac{s_i}{2a}$, 
then $s_{i+1}\leq \frac{s_i}{2}$. 

Since in every step we remove at least one $x\in\{1,\dots,n\}$, the procedure terminates after finitely many steps and therefore, $u$ is well-defined and is at most $n$. Clearly, for $s_i<1$ we have $\Pr[\mathcal{A}_1^{i+1}\cap \mathcal{B}^{i+1}=\emptyset|c_A]>\frac{1}{2}$, implying that with probability at least $1/2$ we have $\mathcal{A}_1\cap \mathcal{B}\subseteq \{n_i+1,\dots,n\}$. Therefore $s_u\geq 1$ and for $i\in\{0,\dots,u-1\}:$ 
$$s_i-s_{i+1}\geq \frac{s_i}{2}\geq \frac{s_{u-1}}{2}\geq 1.$$
\medskip

We finish the proof of the statement by showing that by repeating Bob's strategy $5a$ times independently, Eve queries all elements in $\mathcal{A}_1\cap \mathcal{B}$ with probability at least $1/8$. 

For $x\in\{n_{i+1}+1,\dots,n_i\}$, Eve does not query $x$ with probability  
$$P_{\chi_{E}(x)|c_A}(0)\leq \left(1-\frac{s_i}{2a}\right)^a\leq  e^{-s_i/2}.$$ That means that in the case where 
$$|\mathcal{A}_1^i\cap \mathcal{B}^i\cap \{n_{i+1}+1,\dots,n_i\}|\leq \frac{e^{s_i/2}}{2s_i^2},$$
the probability that Eve does not query at least one element in $\{n_{i+1},\dots,n_i\}\cap \mathcal{A}_1^i\cap \mathcal{B}^i$ is
$$\Pr\left[\prod_{x\in \{n_{i+1},\dots,n_i\}\cap \mathcal{A}_1^i\cap \mathcal{B}^i}\chi_{E}(x)=0|c_A\right]\leq \frac{e^{s_i/2}}{2s_i^2}\cdot e^{-s_i/2}=\frac{1}{2s_i^2}.$$
Since the expected number of elements in $\mathcal{A}_1^i\cap \mathcal{B}^i\cap \{n_{i+1}+1,\dots,n_i\}$ is $s_i$, Markov's inequality implies that this happens with probability at most $\frac{2s_i^3}{e^{s_i/2}}$. Hence, there exists $i$, $0\leq i<u$, such that $|\mathcal{A}_1^i\cap \mathcal{B}^i\cap\{n_{i+1}+1,\dots,n_i\}|> \frac{e^{s_i/2}}{2s_i^2}$ with probability at most $\sum_{i=0}^{u} \frac{2s_i^3}{e^{s_i/2}}$. The function $\frac{2x^3}{e^{x/2}}$ is decreasing for $x\geq 6$, yielding 
$$\sum_{i=0}^{u'-1:s_{u'}\geq 6}\frac{2s_i^3}{e^{s_i/2}}\leq\sum_{i=0}^{u'-1:s_{u'}\geq 6}(s_i-s_{i+1})\frac{2s_i^3}{e^{s_i/2}}\leq \int_{x=s_{u'}}^{\infty}\frac{2x^3}{e^{x/2}}{\rm d}x.$$
Then for $s_{u'}\geq 28$ we obtain: 
$$\sum_{i=0}^{u'-1:s_{u'}\geq 28}\frac{2s_i^3}{e^{s_i/2}}<\frac{1}{8}.$$ Furthermore, for $s_i<28$ (there are at most 5 of them, since $s_{i+1}\leq s_i/2$ and $s_u\geq 1$), the probability that $A_1^i\cap B^i\cap\{n_{i+1}+1,\dots,n_i\}$ contains more than $40s_i$ elements is at most $1/40$, by Markov's inequality. 

The probability that there exists an $i, 0\leq i<u$ such that 
$$|A_1\cap B\cap\{n_{i+1},\dots, n_i\}|> \max\{40s_i,\frac{e^{s_i/2}}{2s_i^2}\}$$ is therefore at most $\frac{1}{8}+\frac{5}{40}=\frac{1}{4}$. If this happens, we say that $\mathcal{A}_1\cap \mathcal{B}$ has a ``bad structure" for finding all its elements by Eve. 

It is sufficient for Eve to repeat Bob's algorithm $(\log{80}+3\log{s_i})a/s_i\leq 5a$ times to get all elements in $\mathcal{A}_1\cap \mathcal{B}\cap\{n_{i+1},\dots,n_i\}$, $i\geq u'$, assuming that there are no more than $40s_i$ of them, with probability at least $1-\frac{1}{2s_i^2}$. 
 
In other words, with $5a$ independent iterations of Bob's strategy, Eve does not query at least one element of well-structured $\mathcal{A}_1\cap \mathcal{B}\cap\{n_u+1,\dots,n\}=\mathcal{A}_1\cap \mathcal{B}$ with probability
\begin{eqnarray*}
&{\rm Pr}&\left[\prod_{x\in \{n_u+1,\dots,n\}\cap A_1\cap B}\chi_{E}(x)=0|\mathcal{C}_A\right]\leq \frac{1}{2}\cdot \sum_{i=0}^u \frac{1}{s_i^2}\\
&\leq& \sum_{i=0}^{u-1} (s_i-s_{i+1})\cdot\frac{1}{2s_i^2}+\frac{1}{2s_u^2}\leq \frac{1}{2}\cdot\int_{x=s_u}^{\infty}\frac{{\rm d}x}{x^2}=\frac{1}{2s_u}\leq \frac{1}{2}.
\end{eqnarray*}

Since $\mathcal{A}_1\cap \mathcal{B}\subsetneq \{n_u+1,\dots,n\}$ with probability at most $\frac{1}{2}$, and $\mathcal{A}_1\cap \mathcal{B}$ is ill-structured with probability at most $\frac{1}{4}$, $\mathcal{A}_1\cap \mathcal{B}\subseteq \{n_u+1,\dots,n\}$ and is well-structured with probability at least $\frac{1}{4}$. In this case Eve queries all intersection elements with probability at least $\frac{1}{2}$ hence, Eve finds all intersection queries of $\mathcal{A}_1$ and $\mathcal{B}$ with probability at least $\frac{1}{8}$.
\qed

\end{proof}

\begin{theorem}
Eve can break an \ka protocol with $O((a+b)^2)$ queries with constant probability.
\end{theorem}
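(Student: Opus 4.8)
The plan is to assemble the attack stated at the start of this section out of the two lemmas above. Eve's strategy has two phases. In the first phase she runs Bob's public querying algorithm $5a$ times, each time on the true oracle $f$ with the observed message $c_A$ and with fresh independent randomness, collecting into $\mathcal{E}$ the union of all positions she has touched; this costs at most $b$ queries per run, hence at most $5ab$ queries. In the second phase she carries out the reconstruction of Lemma~\ref{intersection}: she fixes a permutation $f'$ agreeing with $f$ on everything queried so far together with a query set $\mathcal{A}'_1$ that reproduces the observed $c_A$ and is consistent with $c_B$, derives $\mathcal{A}'_2$ from $\mathcal{A}'_1$ and $c$, queries the true oracle on the $\mathcal{A}'_2$-positions (at most $a$ further queries), extends $f'$ to agree with $f$ there as well, and finally outputs $k_E=g_A(\mathcal{A}',f'(\mathcal{A}'),c,R_{A'})$ with $\mathcal{A}'=\mathcal{A}'_1\cup\mathcal{A}'_2$.

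For the query count I would simply note that the total is at most $5ab+a$, and since $ab\le\bigl((a+b)/2\bigr)^{2}$ and $a\le a+b\le(a+b)^{2}$ this is $O((a+b)^{2})$; the bound is of course also $O(ab)$, but we phrase it via $a+b$ to line it up with the $O(n^{2})$ bound known for the general multi-round setting.

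For correctness I would chain the two lemmas. By Lemma~\ref{A1}, after the first phase the event $W=\{\mathcal{A}_1\cap\mathcal{B}\subseteq\mathcal{E}\}$ holds with probability at least $1/8$. On $W$ Eve knows $f$ on all of $\mathcal{A}_1\cap\mathcal{B}$, so the reconstruction performed in the second phase is exactly the one analysed in Lemma~\ref{intersection}: Bob's view is identical whether Alice used $(\mathcal{A},f)$ or $(\mathcal{A}',f')$, and therefore $k_E$ agrees with $k_B$ with the same probability as $k_A$ does. Combined with $\Pr[k_A=k_B]\ge 1-\varepsilon$, this says that with probability at least $1/8$ Eve holds a key guess that is, from Bob's perspective, indistinguishable in quality from Alice's, which is the sense in which Eve breaks the protocol, and she did so using $O((a+b)^{2})$ oracle queries.

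Most of the real work has already been spent on Lemma~\ref{A1}, so what remains is largely bookkeeping; the one point that needs care is the interface between the two lemmas. Specifically, Eve does not know $\mathcal{B}$, so when she guesses $f'$ on the $\mathcal{A}'_2$-positions she cannot by herself guarantee that $f'$ is correct on $\mathcal{A}'_2\cap\mathcal{B}$ — and this is precisely why the second phase spends $a$ genuine oracle queries on the $\mathcal{A}'_2$-positions, so that $f'$ coincides with $f$ on all of $\mathcal{A}'$ and the world Eve simulates is actually consistent with $c_B$. I would also remark that the constant $1/8$ can be boosted to any constant by repeating the whole attack a constant number of times, although this is not needed for the statement as phrased.
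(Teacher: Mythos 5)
Your proposal is correct and follows essentially the same route as the paper: it composes Lemma~\ref{intersection} with Lemma~\ref{A1}, counts the $5ab+a$ queries, and concludes $O((a+b)^2)$. The extra remark about why the $\mathcal{A}'_2$-positions must be queried on the true oracle is a helpful clarification of the paper's argument rather than a different approach.
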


\begin{proof}
As we claim in the proof of Lemma \ref{intersection}, Eve querying all elements in $\mathcal{A}_1\cap \mathcal{B}$ needs at most $|\mathcal{A}_2|\leq a$ queries more to generate the key that matches with Bob's secret key with the same probability as Alice's key does. Lemma \ref{A1} shows that Eve can always query all elements in $\mathcal{A}_1\cap \mathcal{B}$ with probability $1/8$ with at most $5ab$ queries. Therefore, Eve can break the protocol with constant probability with $5ab+a\in O((a+b)^2)$ oracle queries. 
\qed

\end{proof}

\section{Optimality of the Bound}

Consider the following protocol:
\medskip

{\bf Protocol 2}
\begin{enumerate}
\item Alice chooses a set $\mathcal{A}\subseteq \{1,\dots,n\}$, $|\mathcal{A}|=a=\lceil \sqrt{n}\rceil$ uniformly at random, queries the oracle for the elements of $\mathcal{A}$, and sends $c_A=\{f(x):\ x\in A\}$ to Bob.
\item Bob chooses a set $\mathcal{B}\subseteq \{1,\dots,n\}$, $|\mathcal{B}|=b=\lceil \sqrt{n}\rceil$ uniformly at random, queries the elements of $\mathcal{B}$, chooses a collision element $k\in\{f(y):\ y\in \mathcal{B}\}\cap c_A$ at random, and sends $c_B=\{f(k)\}$ to Alice. He outputs the key $k$.
\item Alice recognizes $k$ according to $c_B$ and $\mathcal{A}$, and outputs the key $k$.
\end{enumerate}
\medskip

{\bf Attack}:  With a constant probability, Bob finds at least one  collision with Alice's set of queries due to the birthday paradox, and therefore, the given protocol is an example of $(\sqrt{n},\sqrt{n},\varepsilon)$-key-agreement protocol for some constant $\varepsilon<1$.  Given just $c$, the secret key is uniformly distributed in $\{1,\dots,n\}$ and furthermore, since the oracle is random, Eve knowing the oracle image for only $o(n)$ elements still has $(1-o(1))\log n$ entropy about $f(x)$ for $x\notin \mathcal{E}$. Hence, Eve has to query the oracle in $\Theta(n)$ positions to get the right secret key with constant probability, implying that the optimal Eve's strategy to break the protocol with constant probability must involve $O(n)=O((a+b)^2)$ oracle queries. 
\section{Conclusion}
We provided an analysis of the most commonly considered attack of these type of key-agreement protocols where the attacker iterates the players' strategies with gradually updated information in the case of one-round protocols. Originally, we were hoping to generalize the result to apply in the multi-round scenario, which has been done very recently by Barak and Mahmoody-Ghidary.  

\section{Acknowledgements}
We thank Louis Salvail who introduced me into the Merkle's puzzles problem.

%\bibliography{exam2}

\begin{thebibliography}{10}
\bibitem{Mer78}
{\sc Merkle, R.~C.}
\newblock Secure communications over insecure channels.
\newblock {\em Communications of the Association for Computing Machinery 21}, 4
  (Apr. 1978), 294--299.

\bibitem{IR91}
{\sc Impagliazzo, R., and Rudich, S.}
\newblock Limits on the provable consequences of one-way permutations.
\newblock In {\em STOC '89: Proceedings of the twenty-first annual ACM
  symposium on Theory of computing\/} (New York, NY, USA, 1989), ACM Press,
  pp.~44--61.

\bibitem{barak-2008}
{\sc Barak, B., and Mahmoody-Ghidary, M.}
\newblock Merkle Puzzles are Optimal
\newblock {\tt http://www.citebase.org/abstract?id=oai:arXiv.org:0801.3669}, 2008.
\end{thebibliography}
\end{document}